\documentclass{article}
\usepackage[letterpaper, right=1in, left=1in, top=1in, bottom=1in]{geometry}

\usepackage{cite}
\usepackage{amsmath,amssymb,amsfonts}
\usepackage{algorithmic}
\usepackage{graphicx}
\usepackage{textcomp}
\usepackage{xcolor}
\def\BibTeX{{\rm B\kern-.05em{\sc i\kern-.025em b}\kern-.08em
    T\kern-.1667em\lower.7ex\hbox{E}\kern-.125emX}}

\usepackage{tikz}
\usetikzlibrary{shapes,arrows,positioning,calc}
\usepackage{algorithm}
\usepackage{amsthm}
\newtheorem{theorem}{Theorem}


\newcommand{\R}{\mathbb{R}}
\newcommand{\N}{\mathbb{N}}
\newcommand{\bmtx}{\begin{bmatrix}}
\newcommand{\emtx}{\end{bmatrix}}
\newcommand{\bsmtx}{\left[ \begin{smallmatrix}} 
\newcommand{\esmtx}{\end{smallmatrix} \right]}

\setlength{\arraycolsep}{2.5pt} 

\tikzstyle{block} = [draw, fill=white, rectangle, 
    minimum height=2em, minimum width=3em]
\tikzstyle{sum} = [draw, fill=white, circle, node distance=1.5cm]
\tikzstyle{input} = [coordinate]
\tikzstyle{output} = [coordinate]
\tikzstyle{pinstyle} = [pin edge={to-,thin,black}]

\begin{document}

\title{\LARGE \bf Robust Online Convex Optimization for Disturbance Rejection}

\author{Joyce Lai and Peter Seiler
	\thanks{J. Lai and P. Seiler are with the Department of Electrical Engineering and Computer Science at the University of Michigan, Ann Arbor, MI 48109, USA. {\tt\small \{joycelai,pseiler\}@umich.edu}
    }%
}


\maketitle

\begin{abstract}
Online convex optimization (OCO) is a powerful tool for learning sequential data, making it ideal for high precision control applications where the disturbances are arbitrary and unknown in advance. However, the ability of OCO-based controllers to accurately learn the disturbance while maintaining closed-loop stability relies on having an accurate model of the plant. This paper studies the performance of OCO-based controllers for linear time-invariant (LTI) systems subject to disturbance and model uncertainty. The model uncertainty can cause the closed-loop to become unstable. We provide a sufficient condition for robust stability based on the small gain theorem. This condition is easily incorporated as an on-line constraint in the OCO controller. Finally, we verify via numerical simulations that imposing the robust stability condition on the OCO controller ensures closed-loop stability.
\end{abstract}


\section{Introduction}
\label{sec:intro}

This paper considers a class of controllers recently developed using online convex optimization (OCO).  Online machine learning and convex optimization methods are powerful tools for learning sequential data. This makes these techniques ideal for high precision control applications like satellite pointing and photolithography. These systems have reliable physics-based models with small error (within the control bandwidth) but are subject to unknown arbitrary disturbances.

This has motivated a large body of recent work using online learning and convex optimization for control
\cite{anava14OCO,hazan11OML,zinkevich03OCP,hazan07,shalev11,hazan16,agarwal19ANIPS,foster20,goel22arXiv}.
The most closely related work is the class of OCO controllers defined in \cite{agarwal19}. Here, OCO with memory is introduced to the discrete-time control setting as an ideal cost minimization problem (which we describe in detail in Section~\ref{sec:OPGD}) to handle arbitrary disturbances and general time-varying convex cost functions. The OCO controller has promising regret guarantees and makes less restrictive assumptions about the disturbance characteristics (e.g., white noise or worst-case) than that of $H_2$ and $H_\infty$ optimal control techniques \cite{zhou95,skogestad05}. This makes OCO methods well suited for high precision control applications with unknown, arbitrary disturbances that degrade the system performance.

The OCO framework in \cite{agarwal19} aims to learn the disturbance characteristics in real time. However, small model errors can cause instability and thus must be explicitly considered in the design. There are additional works that attempt to learn the model from data \cite{rahman2016tutorial,venugopal2000adaptive, santillo2010adaptive,goel22arXivGH,goel21arXiv,goel21PMLR,goel20arXiv}. However, dynamic uncertainties in many high precision applications arise due to high frequency, time-varying, and/or nonlinear effects. It is difficult to learn such unmodeled effects from real-time data. In these cases, it is useful to design a robust OCO-based controller that can learn the disturbance features and tolerate model uncertainty, thus motivating our work.

There are three main contributions of our work. First, we provide a robust stability condition for OCO  control of a discrete linear time-invariant (LTI) plant (Theorem~\ref{thm:scaledSG}
 in Section~\ref{sec:scaledSG}). The scaled small gain condition is written abstractly with an arbitrary choice of an induced system norm.  Our second contribution is to present  a constrained OCO (C-OCO) control algorithm which is robust to nonparametric model uncertainties (Section~\ref{sec:application}). This algorithm uses a specific implementation of the scaled small gain condition with the induced $\ell_\infty$-norm (Section~\ref{sec:boundingMLTV}). This particular choice for the induced norm enables easy implementation of the robust stability condition in the C-OCO algorithm.  The third contribution is to  present numerical results that illustrate the effect of this robust stability constraint on the OCO controller (Section~\ref{sec:numerical}).


\section{Problem Formulation}
\label{sec:probform}

This section formulates the OCO control problem for discrete-time LTI plants subject to both model uncertainty and unknown disturbances.

\subsection{Notation}

Let $v \in \R^{n}$ be a vector. The $p$-norm of this vector is defined as $\|v\|_p := \big[ \sum_{i=1}^n |v_i|^p \big]^{\frac{1}{p}}$.
Next, $\N$ denotes the set of non-negative integers. Let $d:\N \to \R^{n}$ denote a vector-valued sequence $\{d_0,d_1,\ldots\}$. 
The $\ell_p$-norm of $d$ is defined as:
\begin{align}
\|d\|_p = \left[ \sum_{t=0}^\infty \|d_t\|_p^p \right]^{\frac{1}{p}}.
\end{align}
Note that $\|d_t\|_p$ is the $p$-norm of the vector $d_t\in \R^n$ at time $t$ while $\|d\|_p$ is the $\ell_p$-norm of the sequence. The set $\ell_p$ consists of sequences that have finite $\ell_p$-norm. The subset $\ell_{pe} \subset \ell_p$ is the extended space of sequences that have finite $\ell_p$-norm on all finite intervals, i.e. 
$\sum_{t=0}^T \|d_t\|_p^p < \infty$ for all $T\ge 0$. Finally, let $G: \ell_{pe} \to \ell_{pe}$ denote systems that map an input signal $u\in \ell_{pe}$ to an output signal $y\in \ell_{pe}$.  The induced $\ell_p$-norm for this system is defined as:
\begin{align}
    \|G\|_{p\to p} = \sup_{0\ne u \in \ell_{p}} \frac{\|y\|_p}{\|u\|_p}.
\end{align}
To simplify notation, we'll often use $\|d\|$ and $\|G\|$ for the signal norm and system induced norm when the specific $p$-norm is not important.

\subsection{Model Uncertainty}
\label{sec:model-uncertainty}

In this section, we consider the feedback system in Figure~\ref{fig:uncertainty} and discuss the model uncertainty $\Delta(z)$ in more detail.


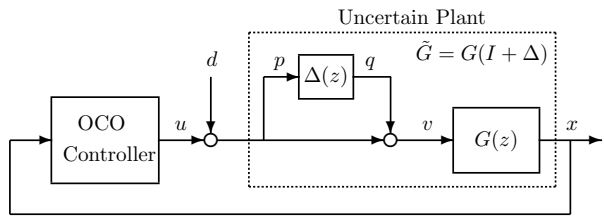
\begin{figure}[h!]
\centering
\scalebox{0.8}{
\begin{picture}(280,105)(20,-45)
 \thicklines 
 \put(20,0){\vector(1,0){20}}  
 \put(40,-20){\framebox(50,40){}}
 \put(52,5){OCO}
 \put(45,-10){Controller}
 \put(90,0){\vector(1,0){22}}  
 \put(98,5){$u$}
 \put(115,0){\circle{6}}
 \put(115,30){\vector(0,-1){27}}  
 \put(113,35){$d$}
 \put(118,0){\vector(1,0){79}}  
 \put(200,0){\circle{6}}  
 \put(140,0){\line(0,1){30}}  
 \put(140,30){\vector(1,0){17}}
 \put(145,35){$p$}
 \put(157,20){\framebox(25,20){$\Delta(z)$}}
 \put(188,35){$q$}
 \put(212,38){$\tilde{G}=G (I+\Delta)$}
 \put(175,55){Uncertain Plant}
 \put(133,-22){\dashbox(145,73)}
 \put(215,5){$v$}
 \put(182,30){\line(1,0){18.25}}  
 \put(200,30){\vector(0,-1){27}}  
 \put(203,0){\vector(1,0){27}}  
 \put(230,-15){\framebox(40,30){$G(z)$}}
 \put(270,0){\vector(1,0){30}}  
 \put(283,5){$x$}
 \put(285,0){\line(0,-1){35}}  
 \put(285,-35){\line(-1,0){265}}  
 \put(20,-35){\line(0,1){35}}  
\end{picture}
}
\caption{Discrete-time feedback system with unknown disturbance $d$ and uncertainty $\Delta(z)$. OCO control is used to reject  the disturbance $d$ without knowledge of the uncertainty $\Delta(z)$.}
 \label{fig:uncertainty}
\end{figure}

Consider the nominal discrete-time, LTI plant $G(z)$ with dynamics:
\begin{align}
\label{eq:nominal-plant}
	x_{t+1} = A\,x_t + B\,v_t,
\end{align}
where $x_t\in\R^{n_x}$ and $v_t\in\R^{n_u}$ are the nominal plant state and input at time $t$, respectively. We assume $x_0=0$ for simplicity.

Model uncertainty for systems with  physics-based models often shows up as unmodeled actuator dynamics affecting the plant input \cite{zhou95, doyle78, skogestad05}. We can account for these unmodeled dynamics by defining an input-multiplicative uncertainty set $\mathcal{G}_{\delta}$ as:
\begin{align}
\label{eq:uncertainty-set}
	\mathcal{G}_\delta
	=
	\left\{
	\tilde{G}(z) = G(z)(I + \Delta(z))
	:
	\Vert\Delta\Vert\le\delta
	\right\},
\end{align}
where $\delta\in[0,\infty)$. Note that the induced $2$-norm is common choice to bound the uncertainty. However, our main result in Section~\ref{sec:main} holds for any induced $p$-norm.

Let $\tilde{G}_0(z)$ denote the true plant dynamics. We assume that the true plant is within the uncertainty set, i.e. $\tilde{G}_0(z) \in \mathcal{G}_\delta$. In other words, there exists a specific $\Delta_0(z)$ such that $\|\Delta_0\| \le \delta$ and $\tilde{G}_0(z) = G(z)(I+\Delta_0(z))\in\mathcal{G}_\delta$.
More generally, we refer to $\tilde{G}(z)=G(z)(I+\Delta(z))$ as the uncertain plant. An alternative viewpoint is that the uncertain plant is $\tilde{G}(z)=G(z)F(z)$ where $F(z)=I+\Delta(z)$ represents unmodeled dynamics. Note that we assume the uncertainty $\Delta(z)$ is LTI.  However, our main result in Section~\ref{sec:main} can be extended to the case where $\Delta$ is a possibly nonlinear time-varying (NLTV) system.

\subsection{OCO Control}
\label{sec:oco-control}

This section describes the OCO controller. We consider the feedback system in Figure~\ref{fig:oco-control} where the OCO controller is shown in more detail.

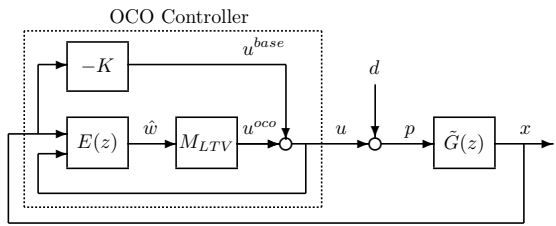
\begin{figure}[h!]
\centering
\scalebox{0.75}{
\begin{picture}(275,110)(20,-45)
 \thicklines 
 \put(20,0){\vector(1,0){30}}  
 \put(50,-17){\framebox(30,25){$E(z)$}}
 \put(35,35){\vector(1,0){15}}  
 \put(35,0){\line(0,1){35}}  
 \put(80,-5){\vector(1,0){25}}  
 \put(135,-5){\vector(1,0){22}}  
 \put(88,0){$\hat{w}$}
 \put(105,-17){\framebox(30,25){$M_{LTV}$}}
 \put(138,0){$u^{oco}$}
 \put(135,-5){\vector(1,0){22}}  
 \put(50,22){\framebox(30,24){$-K$}}
 \put(80,35){\line(1,0){80}}  
 \put(138,40){$u^{base}$}
 \put(160,35){\vector(0,-1){38}}  
 \put(170,-5){\line(0,-1){25}}  
 \put(170,-30){\line(-1,0){135}}  
 \put(35,-30){\line(0,1){20}}  
 \put(35,-10){\vector(1,0){15}}  
 \put(28,-37){\dashbox(150,89)}
 \put(71,56){OCO Controller}
 \put(160,-5){\circle{6}}
 \put(163,-5){\vector(1,0){39}}  
 \put(185,0){$u$}
 \put(205,-5){\circle{6}}
 \put(202,30){$d$}
 \put(205,25){\vector(0,-1){27}}  
 \put(220,0){$p$}
 \put(208,-5){\vector(1,0){27}}  
 \put(235,-17){\framebox(30,25){$\tilde{G}(z)$}}
 \put(265,-5){\vector(1,0){30}}  
 \put(278,0){$x$}
 \put(280,-5){\line(0,-1){40}}  
 \put(280,-45){\line(-1,0){260}}  
 \put(20,-45){\line(0,1){45}}  
\end{picture}
}
\caption{Block diagram representation of the OCO controller in a discrete-time feedback system with unknown disturbance $d_t$ and uncertain plant $\tilde{G}(z)$. The OCO controller is composed of a state-feedback gain $K$,  an estimator $E(z)$, and an LTV system $M_{LTV}$.}
\label{fig:oco-control}
\end{figure}

Unknown disturbances are often caused by environmental factors and moving physical components which degrade system performance. However, these disturbances often also have learnable characteristics. It is typical to model such disturbances as entering at the plant input as shown in Figure~\ref{fig:oco-control}.

OCO control can be used to learn and reject the disturbance without a priori knowledge of the disturbance \cite{anava14OCO,hazan11OML,zinkevich03OCP,hazan07,shalev11,hazan16,agarwal19ANIPS,foster20,goel22arXiv}. Here, we describe a class of OCO controllers closely related to \cite{agarwal19} which considers the case when $\Delta(z)=0$. The OCO controller has the block diagram representation shown in Figure~\ref{fig:oco-control}.
This corresponds to the class of disturbance action controllers defined as:
\begin{align}
\label{eq:oco-control}
	u_t = -Kx_t + \sum_{i=0}^{H-1} M_t^{[i]} \hat{w}_{t-i},
\end{align}
where $K\in\R^{n_u \times n_x}$, $M_t^{[i]}\in\R^{n_u \times n_x}$, and $\hat{w}_t\in\R^{n_x}$ are the state-feedback gain, learned coefficients, and disturbance estimate, at time $t$, respectively. The state-feedback gain $K$ is user-selected while the learned coefficients $\{M_t\}_{i=0}^{H-1}$are typically updated via some online optimization method. For example, \cite{agarwal19} uses online projected gradient descent (OPGD) with memory (see Section~\ref{sec:OPGD}).

The disturbance estimate $\hat{w}_t$ is assumed to be the output of an LTI estimator $E(z)$ with dynamics:
\begin{align}
\label{eq:estimator}
\begin{split}    
	x_{t+1}^e &= A_e x_t^e + B_{e1} x_t + B_{e2} u_t \\
	\hat{w}_t &= C_e x_t^e + D_{e1} x_t + D_{e2} u_t,
\end{split}
\end{align}
where $x_t^e\in\R^{n_e}$ and $\hat{w}_t\in\R^{n_x}$ are the estimator state and output at time $t$, respectively. Typically, $\hat{w}_t$ is an estimate of $Bd_t \in \R^{n_x}$ (possibly with delay), i.e., it is an estimate the disturbance effect on the (nominal) state. The estimate is constructed from $x_t$ and $u_t$. This estimator is motivated by the case when $\Delta(z)=0$.

The first term in (\ref{eq:oco-control}) is considered the baseline controller which we denote by:
\begin{align}
\label{eq:baseline}
	u_t^{base}=-Kx_t.
\end{align}
The main results in Section~\ref{sec:main} can be generalized to the case when the baseline control $u_t^{base}$ is the output of an LTI controller $K(z)$ with input $x_t$. We assume the baseline controller is a static, state-feedback gain for simplicity.

The second term in (\ref{eq:oco-control}) is the output of an finite impulse response (FIR) filter with time-varying coefficients. We denote the FIR filter with time-varying coefficients as a linear time-varying (LTV) system $M_{LTV}$ with input-output dynamics defined as:
\begin{align}
\label{eq:mltv}
	u_t^{oco}=\sum_{i=0}^{H-1} M_t^{[i]} \hat{w}_{t-i}.
\end{align}
where $\hat{w}_t\in\R^{n_x}$ and $u_t^{oco}\in\R^{n_u}$ are the input and output at time $t$, respectively. The FIR filter order $H$ is also referred to as the learning horizon since the coefficients are often updated via OCO using the past $H$ disturbance estimates. We provide an example of online optimization in Sections~\ref{sec:application} and~\ref{sec:numerical}, but the main results in Section~\ref{sec:main} assume only that the coefficients are time-varying.

Given (\ref{eq:baseline}) and (\ref{eq:mltv}), the OCO controller (\ref{eq:oco-control}) can be interpreted as a baseline controller $u_t^{base}$ plus an adapting term $u_t^{oco}$ which corrects for the unknown disturbance $d_t$ based on disturbance estimates.

\subsection{Model Uncertainty Effects on OCO Control}
\label{sec:uncertainty-effects}

The uncertainty $\Delta(z)$ and disturbance $d_t$ have different effects on closed-loop stability. Suppose the state-feedback gain $K$ is stabilizing, i.e., all eigenvalues of $(A-BK)$ are strictly inside the unit disk. Given a perfect plant model, i.e., $\Delta(z)=0$, OCO control can be designed to achieve disturbance rejection with provable guarantees \cite{agarwal19}. In this case, a bounded disturbance $d$ cannot cause signals $x,u,\hat{w}$, etc. to grow unbounded. However, small amounts of model uncertainty can cause the system to become unstable.

As shown in Figures~\ref{fig:uncertainty} and \ref{fig:oco-control}, the (true) plant input is the control input perturbed by an unknown disturbance:
\begin{align}
	p_t = u_t+d_t,
\end{align}
where $u_t,d_t,p_t\in\R^{n_u}$ are the control input, disturbance, and perturbed (true) plant input at time $t$, respectively. The perturbed input $p_t$ is further distorted by the uncertainty $\Delta(z)$. The resulting input to the nominal plant $G(z)$ is:
\begin{align}
v_t &= (I+\Delta)\,p_t = u_t + d_t + q_t,
\end{align}
where $q_t=\Delta p_t\in\R^{n_u}$. Again, $v_t$ is the nominal plant input at time $t$. Not only is there an unknown disturbance $d_t$, but also a distorted signal $q_t$ due to uncertainty $\Delta(z)$.

The additional perturbation $q_t$ can lead to unexpected behaviors that affect the disturbance estimate and FIR filter coefficient update when left unaccounted for in the OCO design. This can occur even when the state-feedback gain $K$ is stabilizing for the true plant $\tilde{G}(z)$. Thus, the OCO controller is required to: i) learn and compensate for the disturbance, and ii) stabilize the system in the presence of uncertainty. The OCO controller must achieve these objectives without a priori knowledge of the disturbance or uncertainty.

\section{Main Result}
\label{sec:main}

This section provides a condition on $M_{LTV}$ that ensures the feedback system with OCO control remains stable even in the presence of the model uncertainty.

\subsection{Linear Fractional Transformation}
\label{sec:lft}

As a first step, we transform the feedback system of the OCO controller and uncertain plant (Figures~\ref{fig:uncertainty} and~\ref{fig:oco-control}) to a standard form as shown in Figure~\ref{fig:LFTdiagram}. This diagram separates the LTI dynamics $P$ from the uncertainty $\Delta$ and time-varying OCO dynamics $M_{LTV}$. Here $P$ includes the  dynamics due to the plant, estimator, and state-feedback gain. This diagram is called a linear fractional transformation (LFT) in the robust control literature \cite{zhou95,skogestad05}. We use the notation $F_U(P,\Gamma)$ for this interconnection with $\Gamma=\bsmtx \Delta & 0 \\ 0 & M_{LTV} \esmtx$ closed around the upper channels of $P$. \newline

\begin{figure}[h!t]
\centering
\begin{picture}(110,90)(40,20)
 \thicklines
 \put(75,25){\framebox(40,40){$P$}}
 \put(143,40){$d$}
 \put(150,35){\vector(-1,0){35}}  
 \put(42,40){$x$}
 \put(75,35){\vector(-1,0){35}}  
\put(75,73){\framebox(40,40){}}
 \put(79,100){{\footnotesize $\Delta$}}
 \put(98,100){{\footnotesize $0$}}
 \put(80,83){{\footnotesize $0$}}
 \put(91,83){{\scriptsize $M_{LTV}$}}
 \put(118,105){$\Gamma$}
 \put(34,70){$\bmtx p \\ \hat{w} \emtx$}
 \put(55,55){\line(1,0){20}}  
 \put(55,55){\line(0,1){35}}  
 \put(55,90){\vector(1,0){20}}  
 \put(140,70){$\bmtx q \\ u^{oco} \emtx$}
 \put(135,90){\line(-1,0){20}}  
 \put(135,55){\line(0,1){35}}  
 \put(135,55){\vector(-1,0){20}}  
 \end{picture}
\caption{Equivalent LFT $F_U(P,\Gamma)$ of original system separating LTI dynamics $P$ from uncertainty $\Delta$ and time-varying learning dynamics $M_{LTV}$.}
\label{fig:LFTdiagram}
\end{figure}
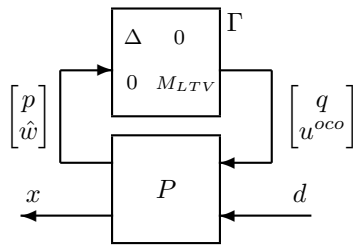

An explicit state-space model for $P$ can be determined from the various components of the feedback system described in Section~\ref{sec:probform}.  The dynamics of $P$ are given by:
\begin{align*}
\bmtx x_{t+1} \\ x_{t+1}^e \emtx & =
\bmtx A-BK & 0 \\ B_{e1}-B_{e2}K & A_e \emtx
\bmtx x_t \\ x_t^e \emtx
+ \bmtx B & B & B \\ 0 & B_{e2} & 0 \emtx
\bmtx q_t \\ u_t^{oco} \\ d_t \emtx \\
\bmtx p_t \\ \hat{w}_t \\ x_t \emtx & = 
\bmtx -K & 0 \\ D_{e1}-D_{e2}K & C_e  \\ I & 0\emtx
\bmtx x_t \\ x_t^e \emtx
+ \bmtx 0 & I & I \\ 0 & D_{e2} & 0 \\ 0 & 0 & 0\emtx
\bmtx q_t \\ u_t^{oco} \\ d_t \emtx
\end{align*}
We use the LFT representation $F_U(P,\Gamma)$ to formulate and state our robust stability theorem in the next subsection.

\subsection{Scaled Small Gain Theorem}
\label{sec:scaledSG}

Our first stability result is a variation of the standard small gain theorem (see Section 5.4 of \cite{khalil02}).  This provides a sufficient condition for the dynamics $F_U(P,\Gamma)$ to have a bounded gain from disturbance $d$ to state $x$. Note stability here is in the sense of bounded gain in some induced norm.
 
\begin{theorem}
\label{thm:smallgain}
 Consider the interconnection $F_U(P,\Gamma)$ 
where $P:\ell_{pe} \to \ell_{pe}$ and $\Gamma:\ell_{pe} \to \ell_{pe}$ are linear systems with finite induced $\ell_p$-norm. Partition $P$ as:
\begin{align}
\bmtx \bar{p} \\ x \emtx
=
\bmtx P_{11} & P_{12} \\
P_{21} & P_{21} \emtx
\,
\bmtx \bar{q} \\ d \emtx,
\end{align}
where $\bar{p}:= \bsmtx p \\ \hat{w} \esmtx$ and  $\bar{q}:= \bsmtx q \\ u^{oco} \esmtx$ are the inputs and outputs of $\Gamma$.
The interconnection has finite induced $\ell_p$-norm, i.e.
$\|F_U(P,\Gamma)\| < \infty$,
if $\|P_{11} \|\, \|\Gamma\| <1$.
\end{theorem}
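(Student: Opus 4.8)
The plan is to adapt the truncation-based proof of the small gain theorem (Section 5.4 of \cite{khalil02}) to the LFT structure of $F_U(P,\Gamma)$. First I would write the interconnection explicitly. Closing $\bar q = \Gamma \bar p$ around the upper channels of the partitioned $P$ gives
\[
\bar p = P_{11}\bar q + P_{12} d, \qquad x = P_{21}\bar q + P_{22} d, \qquad \bar q = \Gamma\bar p .
\]
Eliminating $\bar q$ produces the fixed-point relation $\bar p = P_{11}\Gamma\,\bar p + P_{12} d$ for the internal signal $\bar p$, and the goal reduces to bounding $\bar p$ (hence $\bar q$ and $x$) by $d$ in the $\ell_p$-norm.

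The central step is to turn the small gain hypothesis into an a priori bound on $\bar p$. Since $P$ and $\Gamma$ act on the extended space, I would introduce the truncation operator $T_\tau$ (restriction to $[0,\tau]$, zero afterward) and use causality of $P$ and $\Gamma$, which lets truncation pass through each block so that $\|T_\tau P_{11}\bar q\|_p \le \|P_{11}\|\,\|T_\tau\bar q\|_p$ and $\|T_\tau\bar q\|_p \le \|\Gamma\|\,\|T_\tau\bar p\|_p$. Applying $T_\tau$ to the fixed-point relation, then the triangle inequality and these gain bounds, yields
\[
\|T_\tau\bar p\|_p \;\le\; \|P_{11}\|\,\|\Gamma\|\,\|T_\tau\bar p\|_p + \|P_{12}\|\,\|d\|_p .
\]
Because $\|P_{11}\|\,\|\Gamma\| < 1$, the quantity $1 - \|P_{11}\|\,\|\Gamma\|$ is strictly positive, so rearranging gives $\|T_\tau\bar p\|_p \le \frac{\|P_{12}\|}{1-\|P_{11}\|\,\|\Gamma\|}\,\|d\|_p$ uniformly in $\tau$. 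Letting $\tau\to\infty$ shows $\bar p\in\ell_p$ with this same bound whenever $d\in\ell_p$.

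With $\bar p$ controlled, the remaining estimates are routine: $\|\bar q\|_p \le \|\Gamma\|\,\|\bar p\|_p$, and then $\|x\|_p \le \|P_{21}\|\,\|\bar q\|_p + \|P_{22}\|\,\|d\|_p$. Chaining these produces a single inequality $\|x\|_p \le C\,\|d\|_p$ with $C$ depending only on the finite block norms and on $1-\|P_{11}\|\,\|\Gamma\|$, which is precisely the claim $\|F_U(P,\Gamma)\| < \infty$.

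I expect the main obstacle to be the bookkeeping in the extended space rather than any single inequality: I must guarantee that the closed-loop signals $\bar p,\bar q$ are well defined in $\ell_{pe}$ (well-posedness of the loop) and that the uniform-in-$\tau$ truncated bounds legitimately pass to the limit. Causality of $P$ and $\Gamma$ is exactly what makes truncation commute with the blocks and keeps $C$ independent of $\tau$; without it the argument would fail. As an alternative that sidesteps the truncation bookkeeping, one can observe that $\|P_{11}\Gamma\| \le \|P_{11}\|\,\|\Gamma\| < 1$ makes $I - P_{11}\Gamma$ boundedly invertible on $\ell_p$ via a Neumann series, giving $\bar p = (I-P_{11}\Gamma)^{-1}P_{12}d$ and the same gain bound directly.
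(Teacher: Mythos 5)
Your proposal is correct and follows essentially the same route as the paper's proof: the fixed-point relation $\bar{p} = P_{11}\bar{q} + P_{12}d$, the triangle inequality with induced-norm submultiplicativity, rearrangement under the condition $\|P_{11}\|\,\|\Gamma\| < 1$, and finally the bound on $x$ through $P_{21}$ and $P_{22}$. The one difference is that your truncation/causality bookkeeping (and the Neumann-series alternative) rigorously justifies the rearrangement step by working with the finite quantities $\|T_\tau \bar{p}\|_p$, whereas the paper rearranges the inequality for $\|\bar{p}\|$ directly without first establishing that this norm is finite — so your version is, if anything, the more careful rendering of the same argument.
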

\begin{proof}
The system $P$ is LTI so by the principle of superposition (assuming zero initial conditions):
\begin{align}
    \bar{p} = P_{11} \bar{q} + P_{12} d.
\end{align}
We can bound $\bar{p}$ using the triangle inequality and the definition of the induced norm:
\begin{align}
\label{eq:pbarBound}
   \|\bar{p}\| \le  
    \|P_{11}\| \, \|\bar{q}\|+ \|P_{12} \| \, \| d\|.
\end{align}
Next, $\bar{q} = \Gamma \bar{p}$
so that $\|\bar{q}\| \le \|\Gamma\|\,  \| \bar{p} \|$. Substitute this bound into \eqref{eq:pbarBound} and re-arrange to obtain:
\begin{align}
    \|\bar{p}\|\le 
      \frac{\|P_{12}\|}{1-\|P_{11}\| \|\Gamma \|}  \, \|d\|.
\end{align}
This last step requires the small gain condition $\|P_{11} \| \, \|\Gamma\| <1$ to obtain the bound on $\|\bar{p}\|$.

Finally, the state is $x=P_{21} \bar{q} + P_{22} d$. We can use similar steps and the bound on $\bar{p}$ to obtain:
\begin{align}
    \|x\| \le
    \left[ 
\|P_{22}\|
+\frac{\|P_{21}\|\,
\|P_{12} \| \, \| \Gamma\|}{1-\|P_{11}\| \, \|\Gamma \|}
    \right] \|d\|.
\end{align}
Hence, $F_U(P,\Gamma)$ has finite induced $\ell_p$-norm.   
\end{proof}

The small gain condition in the previous theorem can be conservative as it does not exploit the block structure $\Gamma = \bsmtx \Delta & 0 \\ 0 & M_{LTV} \esmtx$. We can reduce the conservatism by normalizing the blocks and introducing scalings.  Specifically, assume $\|\Delta \|\le \delta$ and $\|M_{LTV}\| \le \beta$. Define the normalized uncertainty and learning dynamics as:
$\tilde{\Delta} = \frac{1}{\delta} \Delta$ and 
$\tilde{M}_{LTV} = \frac{1}{\beta} M_{LTV}$. Stacking these together yields
\begin{align}
\label{eq:GamTil}
\tilde{\Gamma} := 
   \bmtx \frac{1}{\delta} & 0 \\ 0 & \frac{1}{\beta} \emtx \Gamma
   = \bmtx \frac{1}{\delta} \Delta  & 0 \\
      0 & \frac{1}{\beta} M_{LTV} \emtx.
\end{align}
The scaling normalizes each block so that $\| \Gamma \| \le 1$.

Next, the uncertainty is LTI and hence $d_1\Delta = \Delta d_1$ for any scalar $d_1>0$. (In fact, this relation holds even if $d_1$ is also an LTI system but we will not pursue this generalization.) Similarly, the learning dynamics are also linear and hence $d_2M_{LTV} = M_{LTV} d_2$ for any scalar $d_2>0$. It follows that the normalized systems can be equivalently written, for any $d_1,d_2>0$, as:
\begin{align}
\tilde{\Gamma} := 
   \bmtx \frac{1}{d_1\delta} & 0 \\ 0 & \frac{1}{d_2 \beta} \emtx \Gamma
   \bmtx d_1 & 0 \\ 0 & d_2 \emtx.
\end{align}
This discussion leads to the following scaled small gain result.
\begin{theorem}
\label{thm:scaledSG}
Consider the interconnection $F_U(P,\Gamma)$ 
where $P:\ell_{pe} \to \ell_{pe}$ and $\Gamma:\ell_{pe} \to \ell_{pe}$ are linear systems with finite induced $\ell_p$-norm. Assume $\Gamma:= \bsmtx \Delta & 0 \\ 0& M_{LTV} \esmtx$ where
$\|\Delta\| \le \delta$ and $\|M_{LTV}\| \le \beta$.
Partition $P$ as:
\begin{align}
\bmtx \bar{p} \\ x \emtx
=
\bmtx P_{11} & P_{12} \\
P_{21} & P_{21} \emtx
\,
\bmtx \bar{q} \\ d \emtx,
\end{align}
where $\bar{p}:= \bsmtx p \\ \hat{w} \esmtx$ and  $\bar{q}:= \bsmtx q \\ u^{oco} \esmtx$ are the inputs and outputs of $\Gamma$. The interconnection has finite induced $\ell_p$-norm, i.e. $\|F_U(P,\Gamma)\| < \infty$, 
if there exists scalars $d_1, d_2>0$ such that 
\begin{align}
\label{eq:P11til}
\tilde{P}_{11}:=\bmtx \frac{1}{d_1} \, I & 0 \\ 0 & \frac{1}{d_2} \, I \emtx
P_{11}
\bmtx d_1\delta\, I & 0 \\ 0 & d_2\beta\, I \emtx
\end{align}
satisfies $\| \tilde{P}_{11}\| < 1$.
\end{theorem}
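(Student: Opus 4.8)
The plan is to reduce the scaled statement to the unscaled small gain theorem (Theorem~\ref{thm:smallgain}) by absorbing the scalars $d_1,d_2$ into the loop variables. The central fact, already set up in the discussion preceding the theorem, is that a block-diagonal scalar scaling commutes with $\Gamma$ precisely because $\Delta$ and $M_{LTV}$ are linear. I would therefore introduce two block-diagonal scaling operators, $D := \bsmtx d_1 I & 0 \\ 0 & d_2 I \esmtx$ and $S := \bsmtx \delta I & 0 \\ 0 & \beta I \esmtx$. Both are (positive) scalar multiples of the identity on each block, hence invertible and mutually commuting, and the normalized operator $\tilde{\Gamma} := S^{-1}\Gamma = \bsmtx \frac{1}{\delta}\Delta & 0 \\ 0 & \frac{1}{\beta}M_{LTV}\esmtx$ from \eqref{eq:GamTil} satisfies $\|\tilde{\Gamma}\| \le \max(\tfrac{1}{\delta}\|\Delta\|,\tfrac{1}{\beta}\|M_{LTV}\|)\le 1$, using that the induced $\ell_p$-norm of a block-diagonal system does not exceed the larger of its two block norms.

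Next I would define rescaled loop signals $\hat{p} := D^{-1}\bar{p}$ and $\hat{q} := \tilde{\Gamma}\hat{p}$, and rewrite the partitioned relations for $P$ in these variables. Commuting $D$ past $\Gamma$ (equivalently $D\tilde{\Gamma}=\tilde{\Gamma}D$, valid by linearity) together with $DS=SD$ gives $\bar{q}=S\tilde{\Gamma}\bar{p}=DS\hat{q}$, so that the original $(1,1)$ channel becomes exactly $D^{-1}P_{11}DS = \tilde{P}_{11}$ of \eqref{eq:P11til}. The external signals $d$ and $x$ are left untouched, so the rescaled system is an LFT $F_U(P',\tilde{\Gamma})$ with $P' = \bsmtx \tilde{P}_{11} & D^{-1}P_{12} \\ P_{21}DS & P_{22}\esmtx$ and the same $d\mapsto x$ map as the original interconnection.

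Finally I would apply Theorem~\ref{thm:smallgain} to $F_U(P',\tilde{\Gamma})$: since $\|\tilde{P}_{11}\|<1$ by hypothesis and $\|\tilde{\Gamma}\|\le 1$, the unscaled small gain condition $\|\tilde{P}_{11}\|\,\|\tilde{\Gamma}\|<1$ holds, which yields a finite $d\mapsto x$ gain. Because this map coincides with that of $F_U(P,\Gamma)$ (only the internal loop variables were rescaled), I conclude $\|F_U(P,\Gamma)\|<\infty$.

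I expect the only real work to be the bookkeeping that verifies the $(1,1)$ block of $P'$ equals $\tilde{P}_{11}$; this rests entirely on the commutation $D\Gamma=\Gamma D$ (which holds \emph{only} because $\Gamma$ is linear) and on $D$ and $S$ commuting, not on any sharp inequality. A secondary technical point is the degenerate case $\delta=0$ or $\beta=0$, where $S^{-1}$ is undefined; there the corresponding block vanishes and the statement collapses to the ordinary small gain theorem, so it can be handled separately or excluded as the trivial case.
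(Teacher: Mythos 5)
Your proposal is correct and takes essentially the same route as the paper: the paper's proof likewise defines a scaled plant (identical to your $P'$, with blocks $D^{-1}P_{11}DS$, $D^{-1}P_{12}$, $P_{21}DS$, $P_{22}$), observes that these scalings cancel those in $\tilde{\Gamma}$ so that $F_U(P,\Gamma)$ and $F_U(\tilde{P},\tilde{\Gamma})$ define the same map from $d$ to $x$, and then invokes Theorem~\ref{thm:smallgain} with $\|\tilde{P}_{11}\|<1$ and $\|\tilde{\Gamma}\|\le 1$. Your explicit commutation bookkeeping and your remark on the degenerate case $\delta=0$ or $\beta=0$ are details the paper leaves implicit.
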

\begin{proof}
Define a scaled version of the nominal dynamics $P$ as:
\begin{align*}
    \tilde{P} = 
\left[\begin{array}{cc|c}
     \frac{1}{d_1} I & 0 & 0 \\
     0 & \frac{1}{d_2} I & 0 \\ \hline
     0 & 0 & I
\end{array} \right]
\bmtx P_{11} & P_{12} \\ P_{21} & P_{22} \emtx
\left[\begin{array}{cc|c}
     d_1\delta \, I & 0 & 0 \\
     0 & d_2\beta \, I & 0 \\ \hline
     0 & 0 & I
\end{array} \right].
\end{align*}
The constants introduced in the scaled plant $\tilde{P}$ cancel those introduced for $\tilde{\Gamma}$ in \eqref{eq:GamTil}. In other words, $F_U(P,\Gamma)$ and $F_U(\tilde{P},\tilde{\Gamma})$ define the same dynamics from $d$ to $x$. Moreover, $\|\tilde{P}_{11}\| <1$ and $\|\Gamma\|\le 1$ by assumption.  It follows from the small gain theorem (Theorem~\ref{thm:smallgain}) that $F_U(\tilde{P},\tilde{\Gamma}) = F_U(P,\Gamma)$ has finite induced $\ell_p$-norm.
\end{proof}

The scalings $d_1$ and $d_2$ in the robust stability condition (Theorem~\ref{thm:scaledSG}) can be used to reduce the conservatism of the small gain condition (Theorem~\ref{thm:smallgain}). They are known as $D$-scales in the robust control literature (\cite{packard93}
and Chapter 11 in \cite{zhou95}) and are used in structured singular value robust stability tests.

\subsection{Bounding the LTV Dynamics}
\label{sec:boundingMLTV}

In this section, we provide a result specific to the induced $\ell_\infty$-norm for the OCO control implementation. The induced $\ell_\infty$-norm is useful as it allows us to relate $\|M_{LTV}\|_{\infty\to\infty}$ to $\|M_t\|_{\infty\to\infty}$. The robust stability constraint can then be imposed as a point-wise in time constraint $\beta$ on the coefficients $\|M_t\|_{\infty\to\infty}\le\beta$ in the projection step of OPGD. We discuss this further in Section~\ref{sec:OPGD} and~\ref{sec:robust-oco}.

The dynamics $M_{LTV}$ in~\eqref{eq:mltv} can be expressed as:
\begin{align}
\label{eq:Msys2}
    u_t^{oco} &= M_t \hat{W}_t,
\end{align}
where
\begin{align}
\label{eq:stackedM}
    M_t & := \bmtx M_t^{[0]} & \cdots
    & M_t^{[H-1]} \emtx
    \in \R^{n_u \times n_xH},
    \mbox{ and } \\
\label{eq:stackedW}
    \hat{W}_t & := \bsmtx \hat{w}_{t} \\ \vdots \\ \hat{w}_{t-H+1} \esmtx \in \R^{n_x H}
\end{align}
are the stacked FIR coefficients and estimated disturbance history. The following theorem relates the induced $\ell_\infty$-norm of the system $M_{LTV}$ to the matrix induced $\infty$-norm of $M_t$.

\begin{theorem}
\label{thm:Mltv-bound}
Let $M_{LTV}$ be the LTV system defined in \eqref{eq:Msys2} and $M_t$ be the stacked gains defined in \eqref{eq:stackedM}. Then
\begin{align}
  \label{eq:MltvEquality}
   \| M_{LTV} \|_{\infty \to \infty} = \sup_t \| M_t\|_{\infty \to \infty}.
\end{align}
\end{theorem}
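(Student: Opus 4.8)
The plan is to prove the claimed equality
\[
\| M_{LTV} \|_{\infty \to \infty} = \sup_t \| M_t\|_{\infty \to \infty}
\]
by establishing the two inequalities separately. The key structural observation is that $M_{LTV}$ is an FIR system with time-varying coefficients, so its output $u_t^{oco} = M_t \hat{W}_t$ at each time $t$ depends only on the current stacked input history $\hat{W}_t = [\hat{w}_t^\top, \ldots, \hat{w}_{t-H+1}^\top]^\top$ through the single matrix $M_t$. This means that the system's action at time $t$ is ``local'' in a way that should let me reduce the induced system norm to a pointwise matrix norm.

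**First I would** prove the easier direction, $\| M_{LTV} \|_{\infty \to \infty} \ge \sup_t \| M_t\|_{\infty \to \infty}$. Fix any time $\tau$. I want to show $\|M_{LTV}\|_{\infty\to\infty} \ge \|M_\tau\|_{\infty\to\infty}$. Recall that the matrix induced $\infty$-norm $\|M_\tau\|_{\infty\to\infty}$ is the maximum absolute row sum of $M_\tau$, and it is attained by some input vector $W^\star \in \R^{n_x H}$ with $\|W^\star\|_\infty = 1$ whose entries are $\pm 1$ (chosen to match the signs of the maximizing row of $M_\tau$). The idea is to design an input sequence $\hat{w}$ that places exactly the blocks of $W^\star$ into the window $\hat{W}_\tau$ at time $\tau$. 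Concretely, I would set $\hat{w}_{\tau - i}$ equal to the $i$-th block of $W^\star$ for $i = 0, \ldots, H-1$ and zero elsewhere; then $\|\hat{w}\|_\infty = 1$ and the output satisfies $\|u^{oco}\|_\infty \ge \|u^{oco}_\tau\|_\infty = \|M_\tau W^\star\|_\infty = \|M_\tau\|_{\infty\to\infty}$. Taking the supremum over $\tau$ gives this direction.

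**Next I would** prove the reverse inequality $\| M_{LTV} \|_{\infty \to \infty} \le \sup_t \| M_t\|_{\infty \to \infty}$. Let $\beta := \sup_t \|M_t\|_{\infty\to\infty}$. For any input $\hat{w} \in \ell_{\infty e}$ and any time $t$, I would bound
\[
\|u^{oco}_t\|_\infty = \|M_t \hat{W}_t\|_\infty \le \|M_t\|_{\infty\to\infty}\,\|\hat{W}_t\|_\infty \le \beta\,\|\hat{W}_t\|_\infty \le \beta\,\|\hat{w}\|_\infty,
\]
where the last step uses that $\hat{W}_t$ is a stack of a subset of the entries of the sequence $\hat{w}$, so its $\infty$-norm (the max absolute entry) is at most the $\ell_\infty$-norm of the whole sequence. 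Taking the supremum over $t$ yields $\|u^{oco}\|_\infty \le \beta \|\hat{w}\|_\infty$, and dividing through gives the induced-norm bound.

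**The main obstacle** I anticipate is bookkeeping at the boundary $t < H-1$, where the window $\hat{W}_t$ reaches back before the initial time and must be padded with zeros (consistent with the stated zero initial conditions). I need to confirm that this zero-padding does not interfere with either direction: in the lower bound it simply means I should pick $\tau \ge H-1$ when constructing the worst-case input so the full window is available, and in the upper bound the zero padding only decreases $\|\hat{W}_t\|_\infty$, so the inequality is unaffected. A secondary subtlety is making the two different uses of ``$\infty$-norm'' precise — the matrix induced $\infty$-norm on $M_t$ versus the signal $\ell_\infty$-norm — but since both reduce to suprema of absolute values, the row-sum characterization of the matrix norm makes the worst-case-sign input construction clean. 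Beyond these, the argument is essentially the standard fact that an FIR system's induced norm is governed by its instantaneous gain matrices, and I expect no deeper difficulty.
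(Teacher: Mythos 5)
Your proof is correct and follows essentially the same two-step approach as the paper: the upper bound via submultiplicativity of the pointwise gains, and the lower bound via a worst-case sign-pattern input placed into the window at a chosen time. If anything, your version is slightly tighter than the paper's, since fixing an arbitrary $\tau$ and taking the supremum afterwards avoids the paper's assumption that $\sup_t \|M_t\|_{\infty\to\infty}$ is attained at a finite time, and you explicitly handle the zero-padding at times $t<H-1$, which the paper glosses over.
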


\begin{proof}
The equality in \eqref{eq:MltvEquality} is shown in two steps: (A) $\| M_{LTV} \|_{\infty \to \infty} \le \sup_t \| M_t \|_{\infty \to \infty}$
and (B) $\| M_{LTV} \|_{\infty \to \infty} \ge \sup_t \| M_t \|_{\infty \to \infty}$.

First, we show direction (A). Let $\hat{w}$ and $u^{oco}$ be any input-output pair of $M_{LTV}$.
Equation (\ref{eq:Msys2}) and the definition of the induced matrix norm imply that
\begin{align}
  \| u^{oco} \|_\infty &= \sup_t \| M_t \hat{W}_t \|_\infty \nonumber \\
  &\le \sup_t \| M_t \|_{\infty \to \infty} \cdot \| \hat{w} \|_\infty.
\end{align}
Thus,
$\frac{\| u^{oco} \|_\infty }{ \| \hat{w} \|_\infty }\le \sup_t \, \| M_t \|_{\infty\to\infty}$
so that
$\| M_{LTV} \|_{\infty\to\infty} \le \sup_t \| M_t\|_{\infty\to\infty}$.
Hence, claim (A) holds.

Next, we show direction (B).  Suppose
$\sup_t \| M_t \|_{\infty \to \infty}$
achieves its maximum at some finite time $t_0$.  (The proof can be modified if the supremum occurs as $t\to \infty$.)  Then there exists a vector $\hat{W}_0$ such that
\begin{align*}
    \|\hat{W}_0\|_\infty = 1 \mbox{ and } \|M_{t_0} \, \hat{W}_0\|_\infty = \sup_t \| M_t \|_{\infty \to \infty}.
\end{align*}
We can use the vector $\hat{W}_0$ to construct a signal $\hat{w}_0$ such that $u^{oco}=M_{LTV} \, \hat{w}_0$ satisfies
\begin{align*}
    \| u^{oco}\|_\infty \ge \left [ \sup_t \| M_t \|_{\infty \to \infty}  \right]  \|\hat{w}_0\|_\infty.
\end{align*}
Hence, claim (B) holds.
\end{proof}

\section{Application to OCO}
\label{sec:application}

In this section, we demonstrate how the main results can be applied to ensure robust stability of existing OCO controllers. 
We focus on the OCO controllers in \cite{agarwal19,agarwal19ANIPS}
where the coefficients of $M_{LTV}$ are updated via 
OPGD.

\subsection{Estimator Design}
\label{sec:agarwal-estimator}

The class of OCO controllers defined by \cite{agarwal19} considers the feedback system with OCO control (Figure \ref{fig:oco-control}) and no uncertainty (Figure \ref{fig:uncertainty}) when $\Delta(z)=0$. In this case, a perfect plant model is assumed $\tilde{G}(z)=G(z)$. Thus, the nominal plant dynamics can be used to design an estimator $E(z)$ and OPGD to update the coefficients in $M_{LTV}$. Later, we will show how the OPGD projection step can be modified to ensure robust stability for the case that there is uncertainty $\Delta(z)\ne0$.

Without uncertainty, the plant dynamics with unknown disturbance reduce to:
\begin{align*}
    x_{t+1} = Ax_t + Bu_t + Bd_t.
\end{align*}
Note that $Bd_t$ is the effective disturbance on the state at time $t$. Assuming the state $x_t$ is measurable, we can perfectly reconstruct this effective disturbance at the previous time step. Use the measured state and rearranging the plant dynamics:

\begin{align}
\label{eq:what}
    \hat{w}_t = x_t - Ax_{t-1} - Bu_{t-1}.
\end{align}
With no uncertainty, this estimator perfectly reconstructs the effective disturbance with a one-step delay: $\hat{w}_t=Bd_{t-1}$. However, perfect reconstruction is no longer guaranteed with uncertainty, i.e. if $\Delta(z)\ne 0$ then $\hat{w}_t\ne Bd_{t-1}$. In this case, $\hat{w}_t$ is considered an estimate of $Bd_{t-1}$.

The disturbance reconstruction \eqref{eq:what} can be expressed in state-space form as:
\begin{align*}
    x^e_{t+1} 
    & = 0 \,  x^e_t - A \, x_t - B u_t  \\
    \hat{w}_t
    &=
    x^e_t   + x_t,
\end{align*}
where $x^e_t = -A x_{t-1}- B u_{t-1}$ is the estimator state. This has the form of the general LTI estimator $E(z)$ in \eqref{eq:estimator}.
The estimates $\hat{w}_t$ of past disturbances are used to update the FIR coefficients $M_t$ defined in~\eqref{eq:stackedM} by minimizing an “ideal” cost which we describe next.

\subsection{OPGD on an Ideal Cost}
\label{sec:OPGD}

The coefficients $M_t$ are updated at each time step via OPGD in the direction of an “ideal” (per-step) cost. This cost is associated with the nominal plant dynamics (\ref{eq:nominal-plant}) and a per-step cost function. Here, we consider quadratic per-step costs:
\begin{align}
\label{eq:per-step}
	c(x_t,d_t) = x_t^\top Q \, x_t + u_t^\top R \, u_t,
\end{align}
where $Q=Q^\top\succeq0\in\R^{n_x \times n_x}$ and $R=R^\top\succ0\in\R^{n_u \times n_u}$. Note that the finite-horizon cost is defined as:
\begin{align}
\label{eq:total-cost}
    J_T(x,d) = \sum_{t=0}^T c(x_t,d_t),
\end{align}
where $T$ is the total time horizon. The ideal cost $g(M)$ is defined for any static gain $M \subset \mathbb{R}^{n_u \times n_xH}$ based on this per-step cost (\ref{eq:per-step}) which is computed and defined as follows.

Let $\tilde x_\tau\in\R^{n_x}$ and $\tilde u_\tau\in\R^{n_u}$ denote the ideal state and control input at time $\tau$, respectively. The ideal state and input are initialized at $\tau=t-H$ by:
\begin{align}
\label{eq:ideal-ic}
	\tilde x_{t-H}=0
 \mbox{ and }
	\tilde u_{t-H} = \sum_{i=0}^{H-1} M^{[i-1]} \, w_{t-H-i}.
\end{align}
where $t$ is the current time. The ideal state and control input are then computed for $\tau = t-H+1,\ldots,t$ by iterating over the plant dynamics with the static gains $M$:
\begin{align}
\label{eq:xideal}
	\tilde x_\tau & = A \, \tilde x_{\tau-1} + B \, \tilde u_{\tau-1} + \hat{w}_{\tau-1} \\
\label{eq:uideal}
	\tilde u_\tau & = -K \, \tilde x_\tau + \sum_{i=0}^{H-1} M^{[i]} \, \hat{w}_{\tau-i}.
\end{align}
The ideal cost is then defined as $g(M) := c( \tilde{x}_t, \tilde{u}_t)$. In other words, the ideal cost $g(M)$ is the cost of the plant dynamics evolving with static gain $M$ over the learning horizon $H$, neglecting dynamics beyond time $t-H$. The coefficients $M_t$ are updated via OPGD on this ideal cost:
\begin{align}
\label{eq:projection}
	M_{t+1} = \Pi_\mathcal{M} \left( M_t-\eta \nabla_M g(M_t) \right),
\end{align}
where $\eta$ is the learning rate, and $\Pi_\mathcal{M}$ is the projection of the gradient step of $M_t$ onto a constraint set $\mathcal{M}$. Additional details are given in \cite{agarwal19,agarwal19ANIPS}. Next, we show how the constraint set $\mathcal{M}$ can be modified to ensure the robust stability of the OCO feedback system (Figures \ref{fig:uncertainty} and \ref{fig:oco-control}) when $\Delta(z)\ne0$.

\subsection{Robust OCO Control}
\label{sec:robust-oco}

Assuming the uncertainty $\Delta(z)$ is bounded by some $\delta$, i.e., $\Vert \Delta \Vert_{\infty\to\infty}\le\delta$, we can use a bisection to find the required bound $\beta$ on the FIR filter $\Vert M_{LTV} \Vert_{\infty\to\infty} \le \beta$ such that the robust stability condition (Theorem~\ref{thm:scaledSG}) is satisfied. Larger values of $\beta$ risk stability, yet can improve disturbance rejection as they allow the OCO more freedom to adapt to the gains in $M_{LTV}$. Thus, it is important to determine the largest possible value of $\beta$ such that the robust stability condition holds. We refer to this $\beta$ as the stability bound. Theorem~\ref{thm:Mltv-bound} allows us to impose this constraint as a point-wise in time constraint on the FIR coefficients $M_t$.

Once the constraint $\beta$ has been determined, we can impose the constraint by defining the constraint set $\mathcal{M}$ as:
\begin{align}
\label{eq:constraint-set}
    \mathcal{M} := \Big\{ M \in \R^{n_u \times n_x H}
    :
    \Vert M \Vert_{\infty\to\infty} \le \beta \Big\}.
\end{align}
Thus, the projection $\Pi_\mathcal{M}$ can be implemented by:
\begin{align}
\label{eq:projection-explicit}
	M_{t+1} = 
	\begin{cases}
	M_{step}, & \Vert M_{step} \Vert_{\infty\to\infty} \le \beta \\
	\beta\left(\frac{M_{step}}{\Vert M_{step} \Vert_{\infty\to\infty}}\right), & \Vert M_{step} \Vert_{\infty\to\infty} > \beta,
    \end{cases}
\end{align}
where $M_{step}:=M_t - \eta\nabla_Mg(M_t)$ is the gradient step of the FIR coefficients $M_t$, defined at time $t$. The constraint set $\mathcal{M}$ defined in \eqref{eq:constraint-set} and projection $\Pi_\mathcal{M}$ in \eqref{eq:projection-explicit} can be implemented as part of Algorithm 1 in \cite{agarwal19}. The numerical results in the following section are based on this implementation.




\section{Numerical Results}
\label{sec:numerical}

In this section, we provide numerical results of OPGD on a plant with uncertainty. Although we do not explicitly use the robust stability condition (Theorem~\ref{thm:scaledSG}) to compute the stability bound $\beta$, we perform numerical studies to illustrate its effect. Future studies will focus on computing the exact bound, while the results here suggest that a stability bound $\beta$ exists.


Here, unconstrained OCO (U-OCO) refers to OCO control with $\beta=\infty$, i.e., unconstrained FIR filter gains $M_t$. Constrained OCO (C-OCO) refers to OCO control with gains $M_t$ bounded by some $\beta<\infty$. We compare results between U-OCO and C-OCO on the following models:
\begin{align*}
    G(z) &= \frac{0.1}{z-0.9} \\
    F(z) &= \frac{0.1185z+0.1145}{z^2-1.672z+0.9048} 
\end{align*}
where $G(z)$ and $F(z)$ are the nominal plant and unmodeled high frequency actuator dynamics, respectively. Note that $\Delta(z)=F(z)-1$ and $\tilde{G}(z)=G(z)F(z)$. The following disturbance $d_t $ was generated to perturb the control input $u_t$:
\begin{align*}
    d_t = \begin{cases}
    100 & 0 \le t \le 500 \\
    -100 & 500 < t \le T,
    \end{cases}
\end{align*}
where the time horizon is $T=1000$. We use the quadratic per-step cost $c(x_t,d_t)$ and total cost $J_T(x,d)$ defined in~\eqref{eq:per-step} and~\eqref{eq:total-cost}, respectively, with $Q=1$ and $R=10^{-1}$. The state-feedback gain $K=0.15$, learning horizon $H=1$, and learning rate $\eta=5\times 10^{-4}$ are used for all simulations. Note that $K=0.15$ is stabilizing for both the nominal and true plant dynamics.

Figure~\ref{fig:u-oco} shows the per-step cost $c(x_t,d_t)$ and estimated disturbance $\hat{w}_t$ of U-OCO at each time $t$. We compare the performance with a perfect (red dashed) and imperfect (blue solid) plant model. Again, a perfect model is without uncertainty $\Delta(z)=0$, and an imperfect model is with uncertainty $\Delta(z)\ne0$. The disturbance is perfectly reconstructed $\hat{w}_t=Bd_{t-1}$(see Section~\ref{sec:agarwal-estimator}) with a perfect plant model. However, with an imperfect plant model, this is not the case $\hat{w}_t\ne Bd_{t-1}$. Since the ideal cost $g(M)$ computation assumes a perfect plant model and disturbance estimates, this mismatch introduces an error in the coefficient update $M_{t+1}$. This causes an instability which is reflected by the per-step cost and estimated disturbance growing unbounded. On the other hand, U-OCO performance is stable without uncertainty because the disturbance is estimated perfectly. Thus, the constraint $\beta$ is needed on the coefficient update to ensure stability for the imperfect plant model.

\begin{figure}[h!t]
    \centering
    \includegraphics[width=9cm]{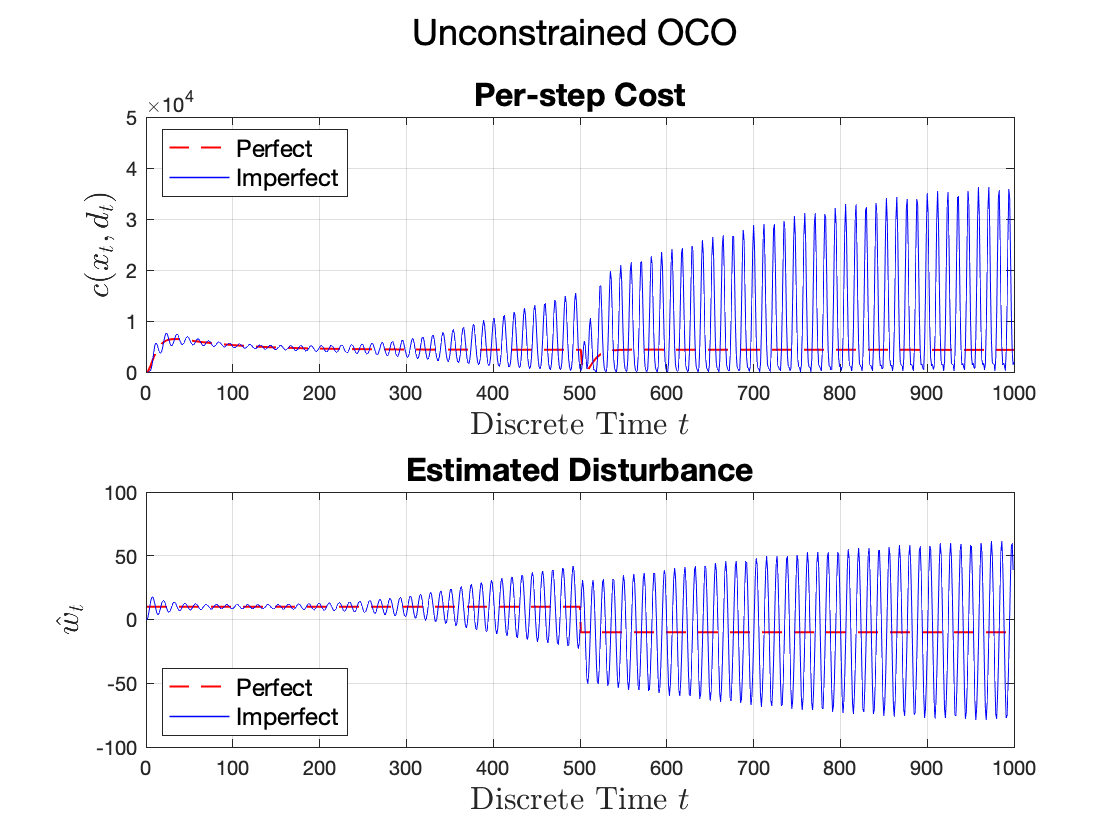}
    \caption{Per-step cost (top) and disturbance estimate (bottom) of running U-OCO on a perfect (red dashed) and imperfect (blue solid) plant model. U-OCO is stable with a perfect model and unstable for an imperfect model.}
    \label{fig:u-oco}
\end{figure}

Figure~\ref{fig:c-oco} shows the per-step cost $c(x_t,d_t)$ and estimated disturbance $\hat{w}_t$ of C-OCO for $\beta=1.5$ at each time $t$. Again, we compare the performance with a perfect (red dashed) and imperfect (blue solid) plant model. As mentioned before, an error in the disturbance estimate introduces an error in the ideal cost gradient. The ideal cost gradient error can cause the gradient step $M_{step}=M_t-\nabla_Mg(M_t)$ to grow too large in the wrong direction. When the constraint $\beta$ is chosen such that the robust stability condition (Theorem~\ref{thm:scaledSG}) is satisfied, the effect of uncertainty induced error on the gradient step of the coefficient update is limited. This is illustrated in Figure~\ref{fig:c-oco} as the performance of C-OCO on the imperfect plant model eventually recovers the performance on the perfect model with $\beta=1.5$. Thus, imposing the constraint $\beta$ can ensure that OCO is robust to uncertainty.

As mentioned in Section~\ref{sec:robust-oco}, the choice of $\beta$ is critical. Figure~\ref{fig:beta-sweep} shows the averaged per-step cost $J_T(x,d)/T$ for C-OCO as a function of $\beta$. Again, we compare the performance with a perfect (red dashed) and imperfect (blue solid) plant model. When $\beta=0$, the OCO has no freedom to learn the disturbance, and pure state-feedback (SF) is recovered for both the perfect and imperfect plants (red and blue circles, respectively). As $\beta$ is increased, the OCO is allowed more freedom to learn the disturbance, and we see similar improved performance in both the perfect and imperfect plants. However, when $\beta$ is "too large" such that the robust stability condition (Theorem~\ref{thm:scaledSG}) no longer holds, C-OCO on the imperfect plant becomes unstable. Figure~\ref{fig:beta-sweep} suggests that the stability bound occurs around $\beta=1.5$. Note that once the constraint $\beta$ becomes inactive, C-OCO recovers U-OCO performance for the perfect and imperfect plants (red and blue squares, respectively). For the perfect plant, this indicates a limit as to how much the OCO can improve on the baseline controller. For the imperfect plant, this indicates a limit as to how much the OCO performance can be degraded by uncertainty. Hence, there is this trade off between OCO performance and robustness to uncertainty.

\begin{figure}[h!t]
    \centering
    \includegraphics[width=9cm]{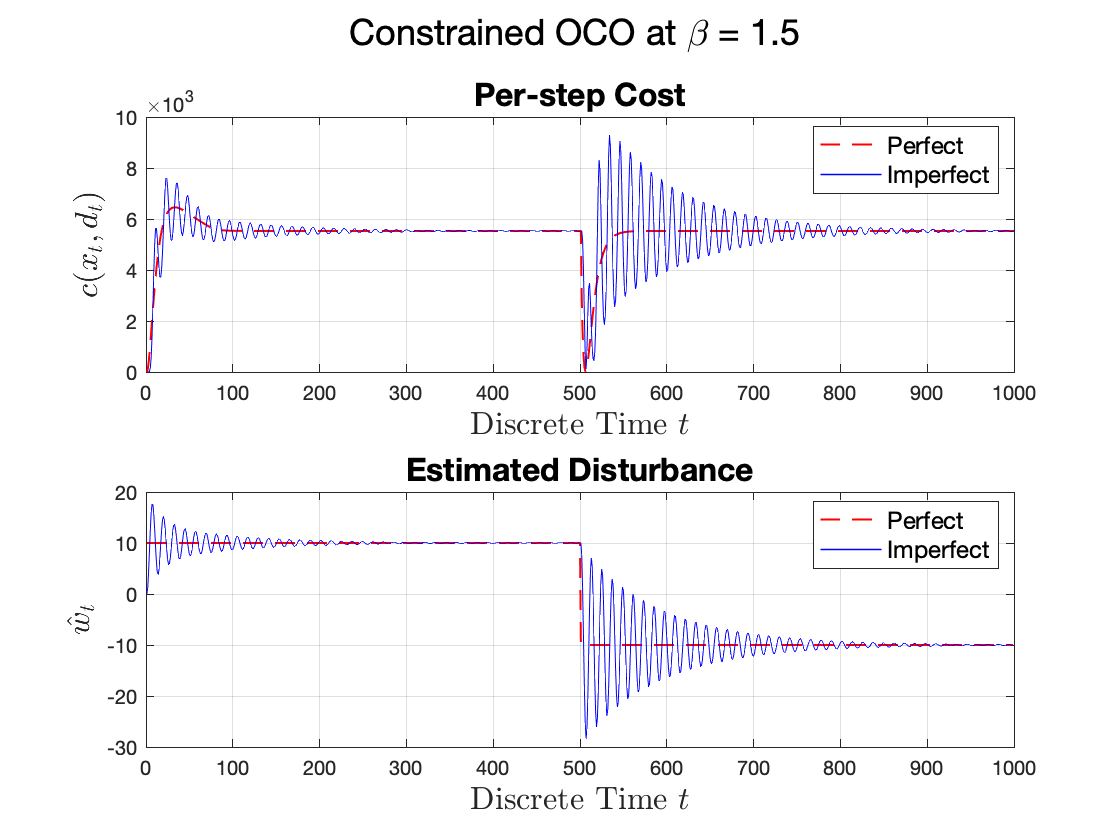}
    \caption{Per-step cost (top) and disturbance estimate (bottom) of running C-OCO at $\beta=1.5$ on a perfect (red dashed) and imperfect (blue solid) plant model. C-OCO is stable for the perfect and imperfect models.}
    \label{fig:c-oco}
\end{figure}

\begin{figure}[h!t]
    \centering
    \includegraphics[width=9cm]{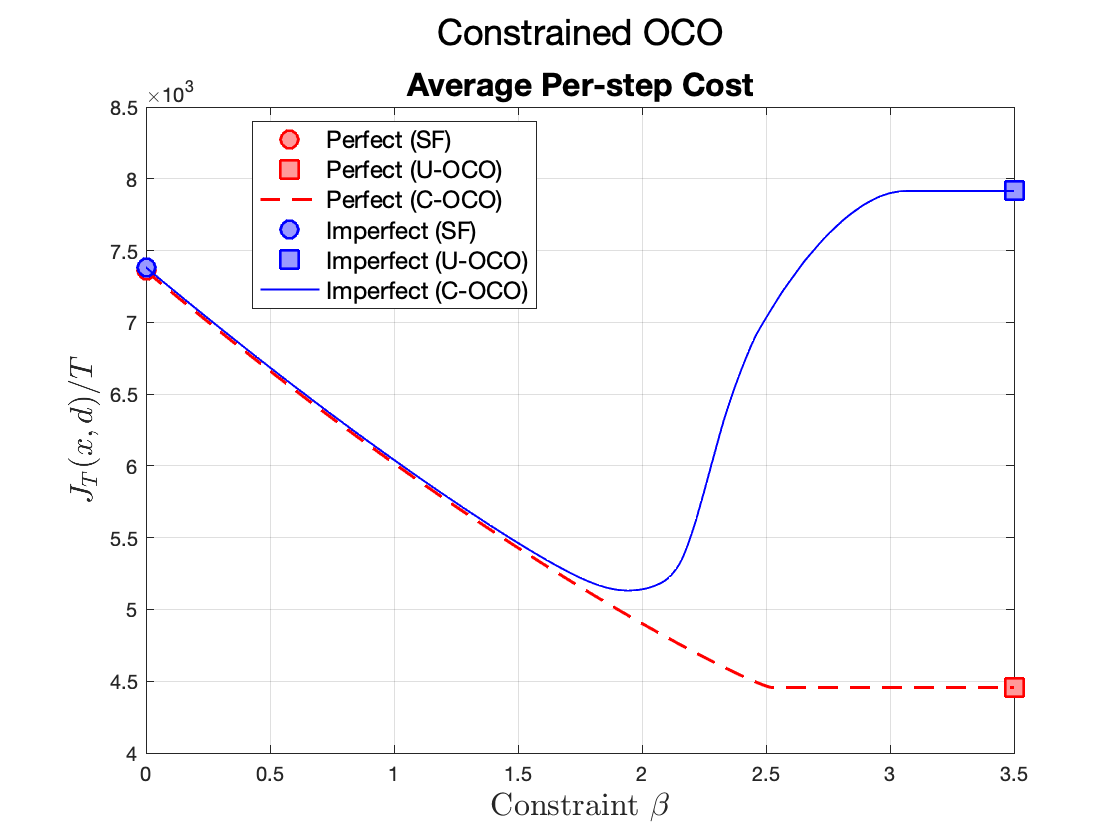}
    \caption{Averaged per-step cost of running C-OCO for varying $\beta$ on a perfect (red dashed) and imperfect (blue solid) plant model. C-OCO results in improved performance for the perfect and imperfect models until the constraint becomes too large and C-OCO on the imperfect model becomes unstable.}
    \label{fig:beta-sweep}
\end{figure}

\section{Conclusion}
\label{sec:conclusion}

In this paper, we establish a robust stability condition using the small gain theorem for a class of OCO controllers with memory and use this result to develop an OCO control algorithm (C-OCO) robust to model uncertainty. In particular, we impose this constraint on the controller by bounding the LTV dynamics of the OCO controller point-wise in time. We provide numerical results to illustrate that imposing the robust stability constraint keeps the closed-loop system stable when it would go unstable otherwise. Future work will study the numerical implementation of the scaled small gain theorem to compute the stability bound $\beta$.



\bibliography{Bibilography/RegretControl,Bibilography/RobControl,Bibilography/OCO}
\bibliographystyle{ieeetr}

\vspace{12pt}

\end{document}